\documentclass[conference]{IEEEtran}
\IEEEoverridecommandlockouts
\usepackage{cite}
\usepackage{amsmath,amssymb,amsfonts,latexsym}
\usepackage{amsthm}
\usepackage{braket}
\usepackage{graphicx}
\usepackage{textcomp}
\usepackage{xcolor}
\usepackage{mathtools}
\usepackage{multirow}
\usepackage{url}
\usepackage{booktabs}
\usepackage{hyperref}
\usepackage{lipsum}
\usepackage{nicematrix}
\usepackage[all]{xy}
\usepackage{tikz-cd}
\usepackage{tikz}
\usepackage{makecell}
\usepackage{caption}
\usepackage{subcaption}
\usepackage{quantikz}

\usetikzlibrary{calc}
\usetikzlibrary{intersections,shapes.arrows}
\usetikzlibrary{arrows.meta, positioning, backgrounds}

\usepgfmodule{nonlineartransformations}
\def\fluttertransform{%
    \pgfgetlastxy\x\y
    \pgfpoint{\x+sin(\y)}{\y+sin(\x)*(30-\x/2)+\x/10}
}

\def\wtilde{\widetilde}

\def \mcal{\mathcal}

\newtheorem*{theorem*}{Theorem}
\newtheorem*{lemma*}{Lemma}
\newtheorem*{example*}{Example}

\makeatletter
\renewenvironment{proof}[1][\proofname]{%
  \par\pushQED{\qed}\normalfont
  \topsep6\p@\@plus6\p@\relax%
  \trivlist\item[\hskip\labelsep\bfseries #1\@addpunct{.}]\ignorespaces
}{%
  \popQED\endtrivlist\@endpefalse
}
\makeatother

\def\BibTeX{{\rm B\kern-.05em{\sc i\kern-.025em b}\kern-.08em
    T\kern-.1667em\lower.7ex\hbox{E}\kern-.125emX}}

\makeatletter
\newcommand{\linebreakand}{%
  \end{@IEEEauthorhalign}
  \hfill\mbox{}\par
  \mbox{}\hfill\begin{@IEEEauthorhalign}
}
\makeatother

\begin{document}

\title{Diagonal Adaptive Non-local Observables on Quantum Neural Networks\\
\thanks{
This work is supported by the U.S. DOE, under award DE-SC-0012704 and BNL's LDRD \#24-061. \textbf{Code available: \url{https://github.com/HHTseng/DiagonalANO}}}
}

\author{
\IEEEauthorblockN{Huan-Hsin~Tseng}
\IEEEauthorblockA{\textit{AI \& ML Department} \\
\textit{Brookhaven National Laboratory}\\
Upton NY, USA  \\
htseng@bnl.gov}
\and
\IEEEauthorblockN{Yan Li}
\IEEEauthorblockA{\textit{Department of Electrical Engineering} \\
\textit{The Pennsylvania State University}\\
University Park, PA, USA \\
yql5925@psu.edu}
\and
\IEEEauthorblockN{Hsin-Yi~Lin}
\IEEEauthorblockA{\textit{AI \& ML Department} \\
\textit{Brookhaven National Laboratory}\\
Upton NY, USA  \\
hlin7@bnl.gov}
\linebreakand
\IEEEauthorblockN{Samuel Yen-Chi~Chen}
\IEEEauthorblockA{\textit{Wells Fargo} \\
New York NY, USA \\
ycchen1989@ieee.org}
}

\maketitle

\begin{abstract}
Adaptive Non-local Observables (ANOs) have shown that making quantum observables dynamic can substantially enlarge the function space of Variational Quantum Algorithms, partly shifting hardware demands from circuit synthesis to measurement design. However, this advantage is accompanied by a steep increase in the number of parameters, as well as the classical optimization cost for varying general Hermitian observables.

We propose a special form of ANO that significantly reduces this burden by considering only diagonal observables paired with quantum circuits. Mathematically, this is equivalent to the full ANO of a large parameter space since diagonal matrices are canonical representatives of the ANO space modulo unitary similarity. As a result, Diagonal ANO retains the same capability of full ANO while reducing $k$-local observable complexity from $O(4^k)$ to $O(2^k)$ and lowering the corresponding measurement-side classical computation. In this sense, diagonal ANO preserves much of the benefit of full ANO while encompassing conventional VQCs as a special case.
\end{abstract}

\begin{IEEEkeywords}
Variational Quantum Algorithms, Quantum Machine Learning, Non-local observables, Heisenberg picture.
\end{IEEEkeywords}

\section{Introduction} \label{sec:Indroduction}
Recent progress toward fault-tolerant hardware, including below-threshold surface-code memories and programmable logical processing, reinforces the long-term promise of quantum computing while underscoring that near-term devices remain resource-constrained. Hybrid Quantum-Classical approaches are therefore actively pursued as a practical route to extract utility from noisy intermediate-scale quantum (NISQ) processors \cite{preskill2018quantum, cerezo2021variational, bharti2022noisy}.

A central hybrid paradigm is the variational quantum algorithm (VQA), where a parameterized quantum circuit is trained in a classical optimization loop \cite{mcclean2016theory, cerezo2021variational}. VQAs support applications such as the variational quantum eigensolver for many-body and molecular problems \cite{peruzzo2014variational, kandala2017hardware}, and they also provide the backbone for quantum machine-learning models (often termed variational quantum circuits or quantum neural networks) in supervised learning and related tasks \cite{biamonte2017quantum}.

In these models, a prediction typically takes the form of an expectation value $f_{\theta}(x)=\langle 0|U^{\dagger}(x,\theta)\, H \,U(x,\theta)|0\rangle$, where the readout observable $H$ is usually fixed (commonly a local Pauli operator or a simple sum thereof). Such hypothesis class $f_{\theta}$ depends jointly on the data encoding, the trainable circuit ansatz $U(x,\theta)$ of parameters $\theta$, and the family of allowed observables. 

Such encoded variational quantum models result in a partial Fourier series $f_{\theta}(x)=\sum_{\omega\in\Omega} c_{\omega}(\theta)e^{i\omega x}$, where the encoding determines the accessible frequency set $\Omega$, while the trainable circuit and the \textit{choice of observable} determine the Fourier coefficients $c_{\omega}(\theta)$ to approximate the target function of the supervised task~\cite{schuld2021effect}. This motivates enlarging the model class not only through the circuit ansatz, but also through the choice of measurement observable.

Quantum machine learning (QML) investigates learning models that incorporate quantum resources, with quantum neural networks (QNNs) commonly realized as variational quantum circuits (VQCs), i.e., parameterized quantum circuits trained via classical optimization \cite{biamonte2017quantum,cerezo2021variational}. These models have been applied to tasks including classification, convolutional architectures, and Reinforcement Learning \cite{perez2020data,chen2022quantum,chen2023asynchronous}. Their performance depends on ansatz expressivity and trainability \cite{sim2019expressibility,mcclean2018barren}, while recent work treats measurement as a trainable component that can further enhance model capacity \cite{chen2021end,chen2025learning}. This measurement-centric perspective directly motivates Adaptive Non-local Observables (ANOs)~\cite{lin2025adaptive}.

ANO addresses this issue explicitly by considering a trainable multi-qubit Hermitian observable, inspired by the Heisenberg picture viewpoint that complexity can be shifted from state preparation to observables. Beyond the original classification task setting, this adaptive-measurement paradigm has also been explored in broader learning tasks such as Reinforcement Learning and Super-Resolution~\cite{lin2025quantumANORL, lin2026anosr}. A drawback is scaling; a general Hermitian observable on $k$ qubits carries $\mathcal{O}(4^k)$ degrees of freedom, which may quickly increase the parameter amount.

This work proposes \emph{Diagonal Adaptive Non-local Observables} (DANO), which couples a variational circuit with a diagonal trainable observable. The key observation is that any Hermitian operator is unitarily diagonalizable. In the ANO setting, this diagonal restriction can be compensated by surrounding circuit unitaries when the available gate set and circuit depth are sufficiently expressive to approximate the relevant diagonalizing transformations. Consequently, DANO reduces Hermitian parameters from $\mathcal{O}(4^k)$ to $\mathcal{O}(2^k)$ while retaining the ability to approach full ANO expressivity.

\section{Background}\label{Sec: Background}

\subsection{Variational Quantum Circuits (VQC)}\label{Subsec: VQC}
The VQC is a member of VQA family to learn classical data of the form $\mathcal{D} = \{(x^{(j)}, y^{(j)}) \, | \, x^{(j)} \in \mathbb{R}^n, y^{(j)} \in \mathbb{R}, \, j \in \mathbb{N} \}$ with $x^{(j)}$ as an input of sample $j$ and $y^{(j)}$ as the ground truth. 

Let $\mcal{H}^n$ be the Hilbert space of $n$-qubit states and $\mcal{U}(\mathcal{H}^n) = \{ U: \mathcal{H}^n \to \mathcal{H}^n \, | \, U U^{\dagger} = U^{\dagger} U = I_{2^n} \}$ be the set of all gates. The VQC is a choice of $(V, U(\theta), H)$ such that (Fig.~\ref{fig: VQC})
    \begin{equation}\label{E: VQC}
        \langle H \rangle_{\text{VQC}} (x; \theta) := \bra{\psi_0} V^{\dagger}(x) U^{\dagger}(\theta) \, H \, U(\theta) V(x) \ket{\psi_0}
    \end{equation}
approximates the ground truth $y \in \mathbb{R}$ of a classical input $x \in \mathbb{R}$, where $V: \mathbb{R}^n \to \mcal{U}(\mcal{H}^n)$ is called an \textit{encoding map}, $U(\theta) \in \mcal{U}(\mcal{H}^n)$ is called the \textit{variational circuit} of tunable parameters $\theta$, and $H: \mcal{H}^n \to \mcal{H}^n$ is a Hermitian operator called \textbf{observable}. 

\begin{figure}[htbp]
 \vskip -0.0in
  \centering
    \begin{quantikz}[row sep=0.1cm, column sep=0.6cm]
        \lstick{$\ket{0}$} &\gate{H} \gategroup[wires=4,steps=2,style={dashed,rounded corners,fill=yellow!30,inner xsep=2pt},background]{Encoding $V$} & \gate{R_y(x_1)} & \ctrl{1} \gategroup[wires=4,steps=3,style={dashed,rounded corners,fill=blue!20,inner xsep=2pt},background]{(Variational $U$) $\times L$} & \qw &\gate{R_y(\theta_1)}  & \meter{} \\
        \lstick{$\ket{0}$} &\gate{H} & \gate{R_y(x_2)} & \targ{}  & \ctrl{1} &\gate{R_y(\theta_2)}  & \meter{} \\
        \lstick{$\ket{0}$} &\gate{H} & \gate{R_y(x_3)} & \ctrl{1} & \targ{}  &\gate{R_y(\theta_3)}  & \meter{} \\
        \lstick{$\ket{0}$} &\gate{H}  & \gate{R_y(x_4)} & \targ{}  & \qw      &\gate{R_y(\theta_4)}  & \meter{}
    \end{quantikz}
  \caption{A VQC diagram of (\ref{E: encoding V}), (\ref{E: variational U}), which is also implemented in Sec.~\ref{sec_exp_results}. The variational block represented by a blue box is repeated $L$ times to increase the circuit depth.}
  \label{fig: VQC}
\end{figure}

Typical VQCs \textit{fix} the choice of encoding $V$ and observable denoted by $H_0$ to vary $\theta$ minimizing a loss function, for example,
    \begin{equation}\label{E: loss}
    L(\theta; \mathcal{D}) = \frac{1}{| \mcal{D} |} \sum_{j=1}^{| \mcal{D} |} \| \braket{H_0} (x^{(j)}; \theta) - y^{(j)} \| ^2
    \end{equation}
The training process of VQC optimizing $\theta$ in (\ref{E: loss}) then traces a trajectory $t \mapsto U(\theta^{(t)})$ on the Lie group $\mcal{U}(\mcal{H}^n)$ (Fig.~\ref{fig: Unitary param transition}). It is shown in \cite{lin2025adaptive} that a training curve in $\mcal{U}(\mcal{H}^n)$ can be translated into a curve on the Hermitian operator space $\mathbb{H}(n) = \{H: \mcal{H}^n \to \mcal{H}^n \, | \, H = H^{\dagger}\}$ by mapping $U^{\dagger}(\theta) \mapsto U^{\dagger}(\theta) \, H \, U(\theta)$, which motivated the \textit{Adaptive Non-local Observable} (ANO) approach~\cite{lin2025adaptive}.

\begin{figure}[htbp]
    \centering 
\begin{tikzpicture}[scale=0.9]
\begin{scope}[yshift=20mm]
\pgftransformnonlinear{\fluttertransform}
\draw [fill=red!15] plot [smooth cycle]
coordinates {(-1.14,-1)(-0.84, -.18) (-0.04, 0.3) (2.24, 0) %
(4.48, -0.56) (4.48, -1.46) (3.38,-1.84)(0.38, -1.28)};
\end{scope}

\coordinate (p1) at (0.1, 1.6);
\coordinate (p2) at (1.4, 1.94);
\coordinate (p3) at (2.5, 1.34);
\coordinate (p4) at (3.8, 1.1);

\filldraw (4., 2.5) node[left] {$\mathcal{U}(\mathcal{H}^n)$};

\filldraw (p1) circle (1pt) node[above] {\footnotesize{$U(\vartheta)$}};
\filldraw (p2) circle (1pt) node[above] {\footnotesize{$U(\theta^{(1)})$}};
\filldraw (p3) circle (1pt) node[below] {\footnotesize{$U(\theta^{(2)})$}};
\filldraw (p4) circle (1pt) node[below] {\footnotesize{$U(\widetilde{\vartheta})$}};

\draw[-, dashed] (p1) -- (p2) -- (p3) -- (p4);

\end{tikzpicture}
\caption{The training process searching for optimal circuit $U(\widetilde{\theta})$ results in observable changes $\theta \mapsto H(\theta):=U^{\dagger}(\theta) \, H \, U(\theta)$. }
\label{fig: Unitary param transition}
 \vskip -0.1in
\end{figure}

\subsection{Adaptive Non-local Observables (ANO)}\label{Subsec: ANO}

The ANO proposes using a dynamical observable $H$ to enlarge the function space while alleviating circuit depth by
    \begin{equation}\label{E: ANO}
        \langle H \rangle_{\text{ANO}} (x) := \bra{\psi_0} V^{\dagger}(x) \, H \, V(x) \ket{\psi_0}
    \end{equation}
where $H \in \mathbb{H}(n)$ admits an arbitrary $k$-local Hermitian $\wtilde{H} (\phi) \in \mathbb{H}(k)$ with $ k \leq n$ and
\begin{equation}\label{E: non-local Hermitian}
  \wtilde{H}(\phi) =   \begin{pmatrix}
c_{11} & a_{12} + i b_{12} & a_{13} + i b_{13} & \cdots & a_{1 2^k} + i b_{1 2^k}  \\
* & c_{22}  & a_{23} + i b_{23}  & \cdots & a_{2 2^k} + i b_{2 2^k}  \\
* & * & c_{33}  & \cdots & a_{3 2^k} + i b_{3 2^k}  \\
\vdots & \vdots & \vdots & \ddots & \vdots \\
* & * & * & \cdots & c_{2^k 2^k}
\end{pmatrix}
\end{equation}
such that 
\begin{equation}\label{E: H tensor product}
    H = I \otimes \cdots \otimes I \otimes \wtilde{H}(\phi) \otimes I \otimes \cdots \otimes I.
\end{equation}
Denoting $K = 2^k$, a non-local observable can be parametrized by $K^2$ (real) parameters  $\phi = \left( a_{ij}, b_{ij}, c_{ii} \right)_{i, j=1}^K$ of the upper triangle, where the lower triangle in (\ref{E: non-local Hermitian}) is the complex conjugate of the upper such that $\wtilde{H}(\phi) = \wtilde{H}^{\dagger}(\phi)$.

It is shown mathematically ANO alone suffices to contain VQC with variational circuits $U(\theta)$ in (\ref{E: VQC}), \textit{i.e.,} VQC $\subsetneq$ ANO. This is due to the \textit{unitary similarity} (equivalence) relation defined on $\mathbb{H}(n)$ by $H_1 \sim H_2 \iff H_1 = U^{\dagger} \, H_2 \, U$ for some $U \in \mcal{U}(\mcal{H}^n)$ such that $\mathbb{H}(n)$ is divided into equivalent classes $\mathbb{H}(n) / \sim$ (Theorem~\cite{lin2025adaptive}), where VQC only occupies one of them (Fig.~\ref{fig: Hermitian equivalent classes}).

However, as ANO increases the functional variety of circuits, the number of parameters significantly increases. Consequently, we propose a new parameter-efficient variant of ANO inheriting the same flexibility while significantly reducing the number of observable variables.

\section{Diagonal Adaptive Non-local Observables}\label{Sec: DANO}

The proof of Theorem~\cite{lin2025adaptive} reveals that the equivalence classes $\mathbb{H}(n)/\sim$ can be identified with $\mathbb{R}^{2^n}$ via the canonical representatives in diagonal form (Fig.~\ref{fig: Hermitian equivalent classes}).
\begin{figure}[htbp]
    \centering 
\begin{tikzpicture}[scale=1]
  \def\Mcurve{
    plot [smooth cycle] coordinates {(0,0) (3,0.3) (3.5,1.5) (3,2.7) (0,3) (-1,1.5)}
  }

  \begin{scope}
    \clip \Mcurve;
    \fill[blue!20, opacity=0.6] (-2,0) rectangle (5,1);
  \end{scope}

  \begin{scope}
    \clip \Mcurve;
    \fill[green!20, opacity=0.6] (-2,1) rectangle (5,2);
  \end{scope}

  \begin{scope}
    \clip \Mcurve;
    \fill[red!20, opacity=0.6] (-2,2) rectangle (5,3);
  \end{scope}

  \begin{scope}
    \clip \Mcurve;
    \draw[dashed, thick] (-2,1) -- (5,1);
    \draw[dashed, thick] (-2,2) -- (5,2);
  \end{scope}

  \draw[thick] \Mcurve;
  \node at (1.5,3.3) {$\mathbb{H}(n)$};

  \node at (1.3,2.5) {$H_0$\footnotesize{ (VQC)}};
  \node at (0.8,1.5) {$H_1$};
  \node at (0.8,0.5) {$H_2$};

  \draw[->, thick] (4,1.5) -- (5,1.5) node[midway, above] {$\sim$};

  \node[draw, circle, fill=red!20, minimum size=3mm] (q3) at (6,2.6) {\footnotesize{$[\Lambda_0]$}};
  \node[draw, circle, fill=green!20, minimum size=3mm] (q2) at (6,1.5) {\footnotesize{$[\Lambda_1]$}};
  \node[draw, circle, fill=blue!20, minimum size=3mm] (q1) at (6,0.4) {\footnotesize{$[\Lambda_2]$}};
  \node at (6.2, 3.4) {$\mathbb{R}^{2^n} \cong \mathbb{H}(n) / \sim $};
\end{tikzpicture}
\caption{Observable space $\mathbb{H}(n)$ is partitioned into equivalent classes labeled by $\mathbb{R}^{2^n}$. In other words, $\mathbb{H}(n)$ is classified by its eigenvalues, and VQC falls in only one particular class.}
\label{fig: Hermitian equivalent classes}
\end{figure}

That is, for any $H \in \mathbb{H}(n)$ there exists $U \in \mcal{U}(\mcal{H}^n)$ and $\Lambda \in \mathbb{H}(n)$ such that
\begin{equation}\label{E: diag H}
    H = U^{\dagger} \Lambda U, \, \text{with} \,\,
\Lambda = 
\begin{pmatrix}
\lambda_1 & 0   & \cdots & 0 \\
0   & \lambda_2 & \cdots & 0 \\
\vdots & \vdots & \ddots & \vdots \\
0   & 0   & \cdots & \lambda_{2^n}
\end{pmatrix}, \,\, (\lambda_j \in \mathbb{R})
\end{equation}

By this observation, we propose using a diagonalization pair $(\Lambda, U)$ to represent an ANO of (\ref{E: non-local Hermitian}) and (\ref{E: H tensor product}), thereby reducing the number of Hermitian parameters from $K^2$ to $K$, which is called \textit{Diagonal Adaptive Non-local Observables (DANO)}. Formally,
\begin{equation}\label{E: DANO}
    \text{DANO} := \{ U^{\dagger} \Lambda U \, | \, \Lambda \text{:  diagonal} ,\, U \in \mcal{U}(\mcal{H}^k) \}
\end{equation}
where $U$ is realized by a conventional VQC with the extension of $K$ diagonal parameters for $\Lambda$. Consequently, DANO poses a simpler hardware challenge than ANO to relax the inherent fixed output bound of VQCs and approaches the same universality as ANO, namely,
\begin{equation}
  | \langle H_0 \rangle_{\text{VQC}} | \leq  | \langle H \rangle_{\text{DANO}} | = | \langle H \rangle_{\text{ANO}} |
\end{equation}
by the Rayleigh Quotient $\lambda_{\text{min}} \leq \langle \psi , Q \psi\rangle \leq \lambda_{\text{Max}} $ for $\| \psi \| = 1$ and $\lambda_{\text{min}} \leq \cdots \leq \lambda_{\text{Max}}$ eigenvalues of $Q$.

Obviously, DANO and ANO are mathematically identical. However, the equivalence on \textit{quantum hardware} requires additional conditions on the generating circuits.

Let $\text{DANO}(G) := \{ U^{\dagger} \Lambda U \, | \, \Lambda \text{:  diagonal} ,\, U \in \langle G \rangle \}$ where circuit $U$ comes from $\langle G \rangle$, the subgroup generated by a finite subset $G \subset SU(K)$.
\begin{theorem*}\label{Thm: DANO dense}
    $\mathrm{DANO}(G)$ is dense in $\mathrm{ANO}$ if $G\subset SU(K)$ is a finite set with $G^{-1}=G$, $\overline{\langle G\rangle} = SU(K)$ where $\overline{\langle G \rangle}$ is the topological closure of $\langle G \rangle$.
\end{theorem*}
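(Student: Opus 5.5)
We identify $\mathrm{ANO}$ here with $\mathbb{H}(k)$, the $k$-local Hermitian block $\wtilde{H}(\phi)$ of (\ref{E: H tensor product}). We equip it with the operator norm; any norm works, since the space is finite-dimensional. Fix an arbitrary $H\in\mathbb{H}(k)$ and $\varepsilon>0$. The goal is to produce a diagonal $\Lambda$ and $U\in\langle G\rangle$ with $\|U^{\dagger}\Lambda U-H\|<\varepsilon$. The plan has three steps: diagonalize exactly, move the diagonalizing unitary into $SU(K)$, then approximate it by a word in $G$ using the density hypothesis.

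First, by the spectral theorem (\ref{E: diag H}) there exist $W\in\mcal{U}(\mcal{H}^k)$ and a real diagonal $\Lambda$ with $H=W^{\dagger}\Lambda W$. The hypothesis only controls $SU(K)$, so we remove the determinant. Pick $\alpha$ with $e^{iK\alpha}=\det W$ and set $W'=e^{-i\alpha}W\in SU(K)$. The scalar phase cancels under conjugation, so $W'^{\dagger}\Lambda W'=W^{\dagger}\Lambda W=H$. Thus $H$ is in the exact image of the map $\Phi_\Lambda:SU(K)\to\mathbb{H}(k)$, $V\mapsto V^{\dagger}\Lambda V$. Note also that $\langle G\rangle$ is a genuine subgroup, because $G^{-1}=G$ ensures closure under inverses; its closure is a closed subgroup, assumed equal to $SU(K)$.

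Second, we show $\Phi_\Lambda$ is continuous, in fact Lipschitz. For unitaries $V,V'$,
\begin{equation*}
\|V^{\dagger}\Lambda V-V'^{\dagger}\Lambda V'\|\le \|(V-V')^{\dagger}\Lambda V\|+\|V'^{\dagger}\Lambda (V-V')\|\le 2\|\Lambda\|\,\|V-V'\|,
\end{equation*}
using unitary invariance of the operator norm. If $\Lambda=0$ then $H=0=U^{\dagger}0\,U$ for every $U\in\langle G\rangle$, and we are done. Otherwise we use $\overline{\langle G\rangle}=SU(K)$ to choose $U\in\langle G\rangle$ with $\|U-W'\|<\varepsilon/(2\|\Lambda\|)$. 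This gives $\|U^{\dagger}\Lambda U-H\|<\varepsilon$, and $U^{\dagger}\Lambda U\in\mathrm{DANO}(G)$. Since $H$ and $\varepsilon$ were arbitrary, $\mathrm{DANO}(G)$ is dense in $\mathrm{ANO}$.

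The only genuinely delicate point is the phase/determinant step, which lets the hypothesis on $SU(K)$ rather than $U(K)$ suffice. Everything else is continuity of conjugation. For a quantitative count of how many gates from $G$ are needed for a given $\varepsilon$, one could additionally invoke the Solovay--Kitaev theorem, which is exactly where the condition $G^{-1}=G$ is used. The gate count is $O(\log^{c}(\|\Lambda\|/\varepsilon))$ for a constant $c$. This refinement is not needed for density itself.
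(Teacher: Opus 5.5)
Your proof is correct and has the same skeleton as the paper's: diagonalize $H=U^{\dagger}\Lambda U$, use the bound $\|U^{\dagger}\Lambda U-V^{\dagger}\Lambda V\|\le 2\|\Lambda\|\,\|U-V\|$, and approximate the diagonalizing unitary from $\langle G\rangle$. The difference is the tool for the last step. The paper invokes the generalized Solovay--Kitaev theorem to get a word $U_L\cdots U_1$ in $G$ within $\varepsilon/(2\|\Lambda\|)$ of $U$; you use the hypothesis $\overline{\langle G\rangle}=SU(K)$ directly. Your route is more elementary and shows that density needs only the closure hypothesis. Finiteness of $G$ and $G^{-1}=G$ matter only for the quantitative refinement, which is what the paper's route buys: a polylogarithmic bound on the word length. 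That bound should properly read $O(\log^{c}(\|\Lambda\|/\varepsilon))$, as you write, rather than the paper's $C_1(-\log\varepsilon)^{C_2}$, since the target accuracy is $\varepsilon/(2\|\Lambda\|)$. Your determinant step also repairs a point the paper passes over. The diagonalizing unitary in (\ref{E: diag H}) lies only in $\mathcal{U}(\mathcal{H}^k)$, and every word in $G\subset SU(K)$ has determinant $1$. So neither density nor Solovay--Kitaev can approximate that unitary until it is rescaled by a global phase into $SU(K)$, which leaves $U^{\dagger}\Lambda U$ unchanged. You also treat $H=0$ explicitly, which the paper's ``non-zero $H$'' excludes without comment.
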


\begin{proof}
For a non-zero $ H \in \text{ANO}$ with the decomposition $H = U^{\dagger} \Lambda U$, by the Generalized Solovay-Kitaev Theorem~\cite{SK-Thm} there exist constants $C_1 > 0$, $C_2 > 0$ and unitaries $U_1, \dots, U_L \in G$ such that,
\begin{equation}\label{E: Solovay-Kitaev}
    \bigl \|U - U_L U_{L-1} \cdots U_1 \bigr\| \leq \varepsilon / (2 \| \Lambda \| )
\end{equation}
with integer $L \leq C_1 \, (-\log {\varepsilon})^{C_2}$ and $\|\cdot\|$ the spectral norm $\| A \| = \sup_{ \| v \|=1 } \| A v \|$. On the other hand, we also have
\begin{equation}\label{E: Hermitian error}
\| H - V^{\dagger} \Lambda V  \| = \| U^{\dagger} \Lambda U - V^{\dagger} \Lambda V  \| \leq 2 \| \Lambda \| \| U - V \|
\end{equation}
whenever $\| U\|  = \| V \| = 1$. Taking $V = U_L U_{L-1} \cdots U_1 \in \langle G \rangle$, one has $\| H - V^{\dagger} \Lambda V  \| \leq \varepsilon$. Since $V^{\dagger}\Lambda V \in \mathrm{DANO}(G)$, it follows that $\mathrm{DANO}(G)$ is dense in $\mathrm{ANO}$.
\end{proof}

Therefore, DANO approximates ANO arbitrarily close, provided the generating set $G$ is good enough. A sharper bound of Solovay-Kitaev Theorem can be found in~\cite{kuperberg2023breaking} to refine the approximation guarantees above. 

In fact, as we demonstrate empirically below, augmenting a conventional VQC with a diagonal observable suffices to improve performance.

\section{Experiments}\label{sec_exp_results}

Two classification tasks are used to evaluate DANO with sliding $k$-local diagonal measurements: (1) Reduced \textbf{MNIST} for digit classification, and (2) face identification using a subset of the \textbf{Extended Yale Face Database B}.

\subsection{\textbf{Model Setup}}\label{subsec: DANO model setup}

DANO contains two variational components as in (\ref{E: DANO}): a diagonal observable $\Lambda(\lambda)$ for measurement and a variational unitary $U(\theta)$ implemented by a VQC. The trainable parameters are $\{ \lambda, \theta \}$.

\subsubsection{\textbf{VQC Architecture}}\label{subsubsec: VQC Architecture}
The VQC in the experiments consists of a date-encoding $V(x)$ and a variational $U(\theta)$ in (\ref{E: VQC}) (see Fig.~\ref{fig: VQC}). Denote the Pauli matrix by $\mathcal{P}=\{I,X,Y,Z\}$ and the corresponding \textit{rotations} $\{ R_x(\phi), R_y(\phi), R_z(\phi) \}$ of angle $\phi$. The encoding VQC is arranged as,
\begin{equation}\label{E: encoding V}
    V(x) = \bigotimes_{j=1}^n R_y(x_j) \circ \texttt{H}
\end{equation}
to take an input $x = (x_1, \ldots, x_n) \in \mathbb{R}^n$ and acting on an initial state $\ket{0}^{\otimes n}$ with $\texttt{H}$ a Hadamard gate. The variational gate is constructed as,
\begin{equation}\label{E: variational U}
    U(\theta) = \prod_{\ell=1}^L  \left( \bigotimes_{j=1}^n R
    _y(\theta_j^{(\ell)}) \circ  \mcal{C}^{(\ell)} \right)
\end{equation}
where $\theta = \{ \theta^{(\ell)}_j \} \in \mathbb{R}^{L\times n}$ are trainable parameters. Each $\mathcal{C}^{(\ell)}$ is a nearest-neighbor entangling layer on a 1D chain arranged in a brickwork pattern: CNOTs on $(1,2),(3,4), \ldots$ followed by a shifted layer on $(2,3),(4,5),\ldots$ (no periodic wrap-around). The layer structure is repeated $L = 6$ times in our experiments.

\subsubsection{\textbf{DANO Sliding Measurements}}\label{subsubsec: sliding k-local}

A family of $n$ trainable $k$-local \emph{diagonal} observables $\{\Lambda_1,\ldots,\Lambda_n\}$ is used for measurement. Each
$\Lambda_j=\mathrm{diag}(\lambda^{(j)}_1,\ldots,\lambda^{(j)}_{K})\in\mathbb{R}^{K\times K}$
has adaptable eigenvalues (as parameters) and acts on the $k$-qubit subset
$Q_j=(j,\ldots,j+k-1)\ (\mathrm{mod}\ n)$.
The corresponding expectation value is computed by inserting the identity $I$ on qubits \emph{not} in $Q_j$,
\begin{equation}\label{E: DANO output}
    z_j := \bra{0^{\otimes n}} V^{\dagger}(x) U^{\dagger}(\theta)\,
    \Big(I \otimes \cdots \otimes \underbracket{\Lambda_j}_{k} \otimes \cdots \otimes I \Big)\,
    U(\theta) V(x) \ket{0^{\otimes n}},
\end{equation}
so that $z:=(z_1,\ldots,z_n)\in\mathbb{R}^n$ is an $n$-dimensional output. Since $j$ ranges over all qubit indices $1,2,\ldots,n$, the observables are measured in a \emph{sliding} fashion that captures local structures. This scheme was introduced in \cite{lin2025adaptive}. For example, if $n=4$ and $k=3$, the cyclic $k$-qubit subsets are
\[
Q_1=(1,2,3),\quad Q_2=(2,3,4),\quad Q_3=(3,4,1),\quad Q_4=(4,1,2),
\]
see Fig.~\ref{fig: sliding k-local DANO}. The experiments below are both classifications tasks of 10 classes; thus $n=10$ is chosen to compare with various $k$-local DANO models.

\begin{figure}[!h]
 \vskip -0.0in
  \centering
  \begin{tikzpicture}[scale=0.8, node distance=1.5cm]
    \node (circuit1) {
    \begin{quantikz}[row sep=0.4cm, column sep=0.5cm]
        \lstick{1}  &\meter{} \gategroup[3,steps=1,style={dashed,rounded corners,fill=red!30,inner xsep=2pt},background]{}\\
        \lstick{2} &\meter{} \\
        \lstick{3} &\meter{} \\
        \lstick{4} &\meter{}
    \end{quantikz}
    };
    \node (circuit2) [right=0.2cm of circuit1] {
    \begin{quantikz}[row sep=0.4cm, column sep=0.5cm]
        \lstick{1}  &\meter{} \\
        \lstick{2}  &\meter{} \gategroup[3,steps=1,style={dashed,rounded corners,fill=red!30,inner xsep=2pt},background]{}\\
        \lstick{3}  &\meter{} \\
        \lstick{4}  &\meter{}
    \end{quantikz}
    };
    \node (circuit3) [right=0.2cm of circuit2] {
    \begin{quantikz}[row sep=0.4cm, column sep=0.5cm]
        \lstick{1}  &\meter{} \gategroup[1,steps=1,style={dashed,rounded corners,fill=red!30,inner xsep=2pt},background]{}\\
        \lstick{2} &\meter{} \\
        \lstick{3}  &\meter{} \gategroup[2,steps=1,style={dashed,rounded corners,fill=red!30,inner xsep=2pt},background]{}\\
        \lstick{4}  &\meter{} 
    \end{quantikz}
    };
    \node (circuit4) [right=0.2cm of circuit3] {
    \begin{quantikz}[row sep=0.4cm, column sep=0.5cm]
        \lstick{1}  &\meter{} \gategroup[2,steps=1,style={dashed,rounded corners,fill=red!30,inner xsep=2pt},background]{}\\
        \lstick{2} &\meter{} \\
        \lstick{3}  &\meter{} \\
        \lstick{4}  &\meter{} \gategroup[1,steps=1,style={dashed,rounded corners,fill=red!30,inner xsep=2pt},background]{}
    \end{quantikz}
    };
  \end{tikzpicture}
  \caption{Sliding $k$-local DANO measurements of $k=3$, $n=4$.}
  \label{fig: sliding k-local DANO}
\end{figure}

\subsection{Exp 1: MNIST Digit Classification}

\subsubsection{\textbf{Dataset}}\label{subsubsec: MNIST Dataset}
The MNIST dataset contains grayscale images of handwritten digits (0–9), forming a 10-class classification task. Each image has an original resolution of $28 \times 28$. To reduce dimensionality for VQC processing, the images are downsampled to $4 \times 4$, corresponding to a 16-qubit input. A subset of 10,000 samples is used, split randomly into 9,000 training and 1,000 test images for evaluation.

\subsubsection{\textbf{Results}}\label{subsubsec: MNIST Results}

DANO is evaluated on MNIST with sliding $k$-local diagonal observables to compare with ANO. Across all settings, the VQC architecture is held fixed. The pure VQC with fixed Pauli measurements serves as the \textit{baseline}. Increasing $k$ raises measurement non-locality and improves model performance.

Fig.~\ref{fig: MNIST DANO Test Accu} shows the test accuracies over epochs, and Table~\ref{tab:mnist_dano} summarizes the best test accuracies and parameter counts. Since the VQC model is fixed across all settings, the observed improvements are attributable to the added variety from the dynamical diagonal observables $\Lambda(\lambda)$. 

The empty entries of ANO 6-local and 8-local in Table~\ref{tab:mnist_dano} are due to simulations exceeding the available 128GB RAM in our statevector implementation with Pennylane. This limitation is resulted from the classical memory/computational scaling of ANO, with details explained below.

\begin{figure}[htbp]
\vskip -0.1in
\begin{center}
\centerline{\includegraphics[width=1\columnwidth]{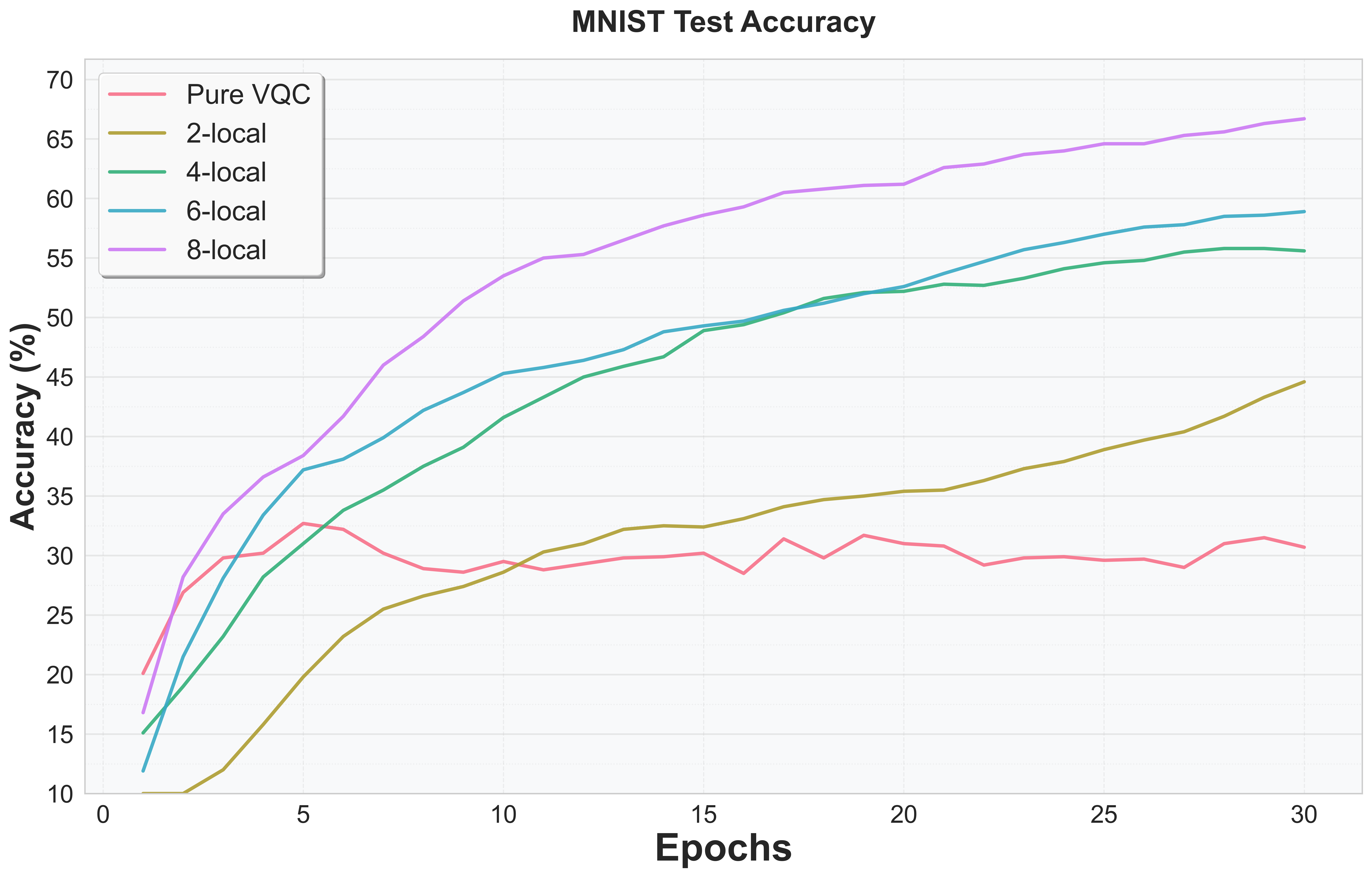}}
\caption{\textbf{[MNIST test accuracy.]} All models share the same VQC architecture. Increasing measurement non-locality improves test accuracy over the fixed-Pauli baseline (pure VQC).}
\label{fig: MNIST DANO Test Accu}
\end{center}
\vskip -0.1in
\end{figure}

\begin{table}[ht]
\centering
\caption{\textbf{$k$-local DANO on MNIST} from Fig.~\ref{fig: MNIST DANO Test Accu}. Empty entries of ANO 6-local and 8-local are due to the computing limit of classical simulation with 128Gb memory given.}
\label{tab:mnist_dano}
\vspace{2mm}
\begin{tabular}{p{0.14\linewidth} r c  r c}
    \toprule
    \textbf{Model} & \multicolumn{2}{c}{\textbf{DANO}} & \multicolumn{2}{c}{\textbf{ANO}}\\
    \cmidrule(r){2-5}
    {} & \textbf{Params $U + \Lambda$} & \textbf{Best accu} &\textbf{Params $U + H$}  & \textbf{Best accu} \\
    \cmidrule(r){1-5}
    {2-local}  & 160    & 44.6 \%           & 352        & 61.7 \%\\
    {4-local}  & 352    & 55.8 \%           & 4,192      & \textbf{75.2 \%}\\
    {6-local}  & 1,120   & 58.9 \%          & 65,632     & -\\
    {8-local}  & 4,192   & \textbf{66.7 \%} & 1,048,672    & - \\
    \cmidrule(r){1-5}
    {Pure VQC (baseline)} & 96     & 32.7 \%      & 96         & 32.7 \%\\
    \bottomrule
\end{tabular}
\end{table}

For DANO, the best test accuracy increases from $32.7\%$ (Pure VQC baseline) to $66.7\%$ (8-local),
\begin{itemize}
    \item VQC $\to$ 2-local: $+11.9\%$
    \item 2-local $\to$ 4-local: $+11.2\%$
    \item 4-local $\to$ 6-local: $+3.1\%$
    \item 6-local $\to$ 8-local: $+7.8\%$
\end{itemize}

Notably, the transition from DANO 4-local $\to$ 6-local yields only a modest gain ($+3.1\%$) despite a $4\times$ increase in diagonal measurement parameters ($256\to 1024$), suggesting partial saturation on downsampled MNIST.

Although Table~\ref{tab:mnist_dano} shows that 4-local ANO can exceed 8-local DANO in accuracy, DANO is intended to trade some accuracy at fixed small $k$ for substantially better \emph{computational scalability}. This advantage is especially relevant because the observable parameters are optimized on a classical computer rather than on quantum hardware. 

For $K=2^k$, a general $k$-local ANO observable carries $O(nK^2)=O(n4^k)$ trainable parameters, whereas DANO keeps only the diagonal eigenvalues and uses $O(n2^k)$. In the MNIST setting with $n=16$, ANO at $k=8$ requires $1,048,576$ real parameters over all sliding observables, while DANO requires only $4,096$, giving a $256\times$ reduction. 

In 16-qubit statevector simulation, the measurement-side forward/observable-gradient cost scales as $O(n2^{n+k})$ for ANO but only $O(n2^n)$ for DANO. At $k=8$, this corresponds to $268,435,456$ (ANO) flops vs. $1,048,576$ (DANO) over all observables. The optimizer state and memory overhead are reduced by the same factor. As $k$ increases, ANO therefore becomes rapidly impractical in classical simulation. As such, the 6-local and 8-local ANO exceeded the available 128 GB RAM in our implementation and could not be completed. This is precisely the regime where DANO remains computationally manageable. 

This trade-off is consistent with \textbf{Theorem }in Sec.~\ref{Sec: DANO}, where DANO approaches the full ANO when the circuit family generated by $G$ is sufficiently expressive to realize the desired eigenbases. The accuracy gap therefore suggests that the chosen low-depth gate set $G$ does not fully minimize that discrepancy between the ideal observable and the realizable Hermitian in (\ref{E: Hermitian error}).Therefore, some expressivity is sacrificed in exchange for large savings in classical memory and measurement-side computation. These exchanges make larger-$k$ measurements practical and significant in the next more challenging face-recognition experiment.

\subsection{Exp 2: Yale B- Face recognition}

\subsubsection{\textbf{Dataset}}\label{subsubsec: YaleB Dataset}
The Extended Yale Face Database B comprises grayscale face images acquired under controlled variations in illumination and pose. A partial dataset is constructed for a 10-class face identification task by randomly selecting 10 subjects (Fig.~\ref{fig: Yale B}) and retaining only non-ambient images under ``easy lighting'' conditions (defined by $|\text{azimuth}|<25^\circ$), yielding 1,980 images in total (198 per subject).

\begin{figure}[htbp]
\vskip -0.0in
\begin{center}
\centerline{\includegraphics[width=1\columnwidth]{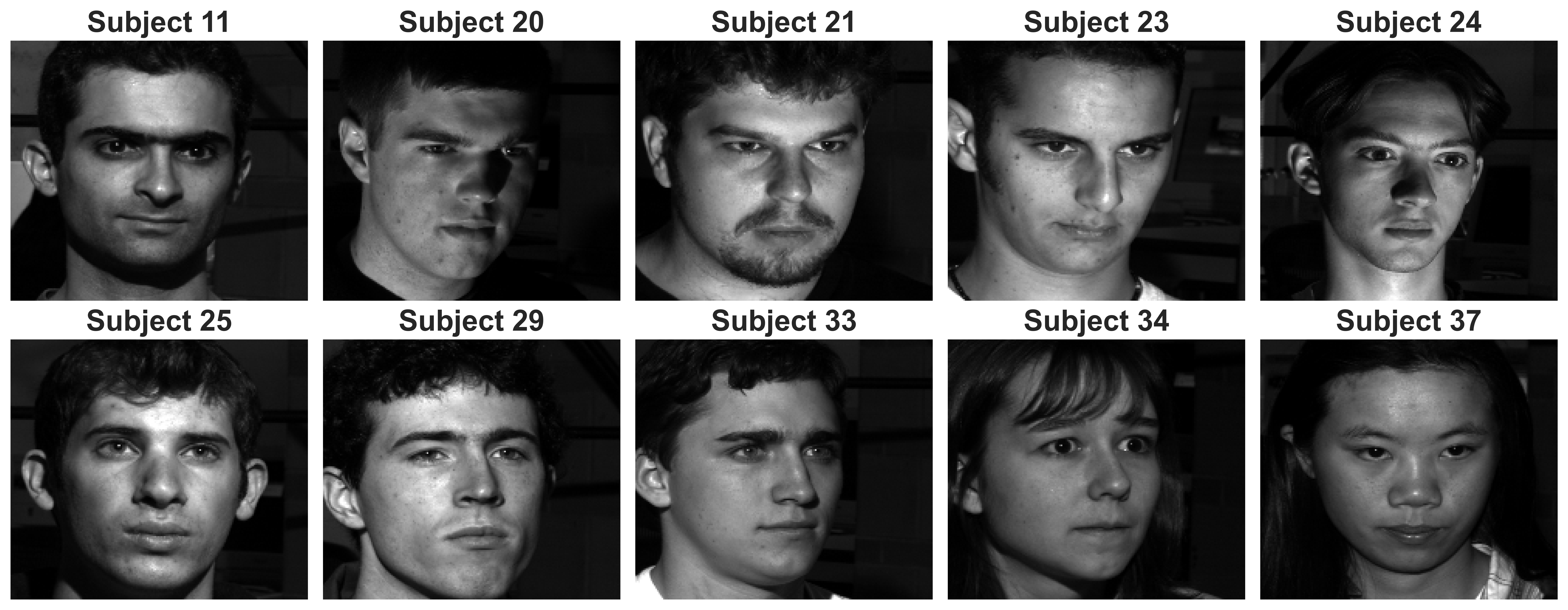}}
\caption{\textbf{Yale B}: 10 individuals selected for classification.}
\label{fig: Yale B}
\end{center}
\vskip -0.1in
\end{figure}

Each cropped image ($192\times168$ pixels) is flattened, standardized, and reduced to 16 dimensions via PCA. The resulting features are linearly rescaled to $[-\pi,\pi]$ for angle encoding, and a stratified split of 1,584/198/198 samples is used for training/validation/test, respectively.

The choice of PCA serves to reduce data dimension while introducing \textit{no additional} trainable parameters, unlike neural-network or CNN encoders whose learned weights can sculpt latent spaces to favor classification. To verify that the reduced 16-dimensional representation retains sufficient identity-related information, inverse-PCA reconstruction is performed for assessment, Fig.~\ref{fig: Yale B reconstruct}. The reconstructions remain visually reasonable and identity recognizable, indicating that the PCA features preserve salient structure and provide a meaningful latent representation for the subsequent VQC classification.

\begin{figure}[htbp]
\vskip -0.0in
\begin{center}
\centerline{\includegraphics[width=1\columnwidth]{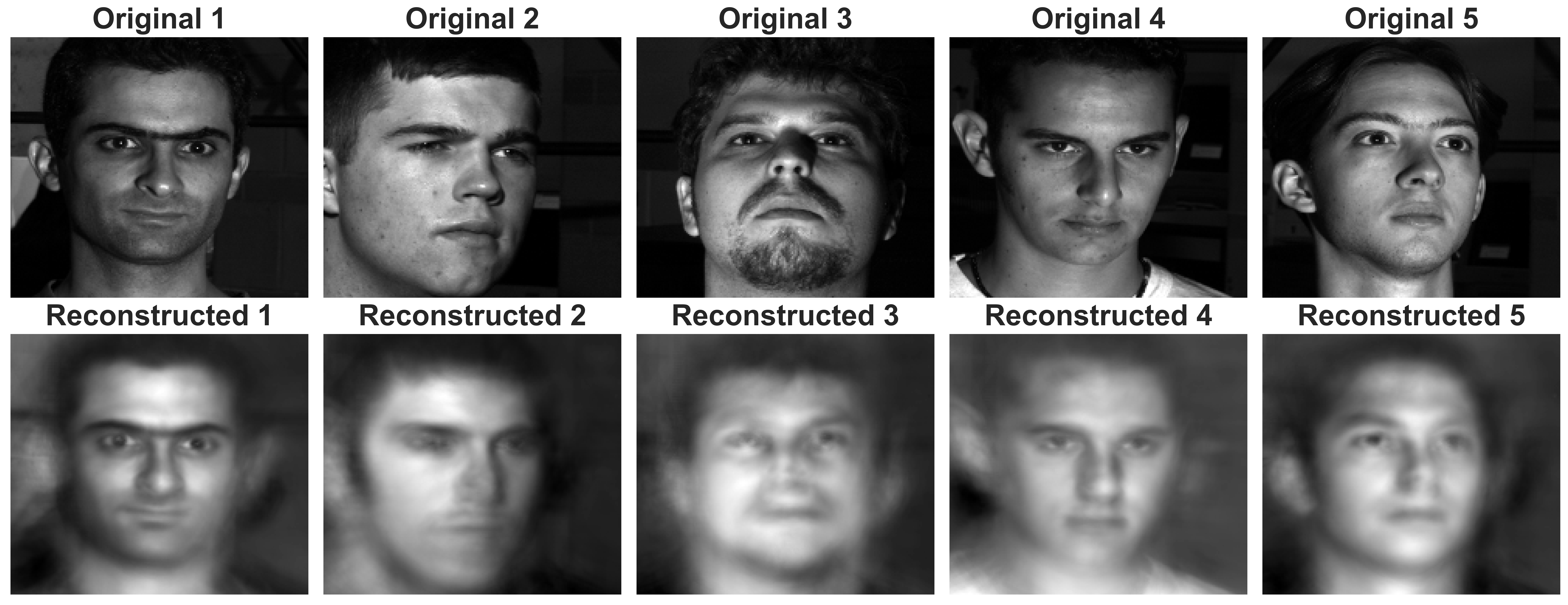}}
\caption{\textbf{Reconstruction} of first 5 individuals in Fig.~\ref{fig: Yale B}.}
\label{fig: Yale B reconstruct}
\end{center}
\vskip -0.15in
\end{figure}

\subsubsection{\textbf{Results}}\label{subsubsec: Yale B Results}

Applying the same DANO framework as in the MNIST experiment, non-local $k$ is varied to assess the effect with the VQC architecture remaining fixed.

Table~\ref{tab:yaleB_dano} shows a strong, monotonic improvement in test accuracy as the measurement locality $k$ increases, while the variational unitary remains fixed with $\#U(\theta)=96$. Accuracy rises from $30.3\%$ (Pure VQC) to $87.88\%$ (10-local), a $+57.58\%$ gain achieved solely through the diagonal Hermitian parameters. The corresponding increments are: 
\begin{itemize}
    \item Pure VQC $\to$ 4-local: $+21.72\%$
    \item 4-local $\to$ 6-local: $+12.63\%$
    \item 6-local $\to$ 8-local: $+12.12\%$
    \item 8-local $\to$ 10-local: $+11.11\%$.
\end{itemize}
These gains remain substantial even as $\#\Lambda(\lambda)$ grows rapidly from $0$ to $16{,}384$.

Compared with MNIST Table~\ref{tab:mnist_dano}, where improvements diminish beyond moderate $k$, Yale B benefits more consistently from increasing locality. This distinction perhaps reflects task complexity: face recognition under illumination variation exhibits richer, higher-order feature dependencies than downsampled digits, making broader $k$-local measurements more informative. As a result, DANO continues to extract useful correlations on Yale B long after gains saturate on MNIST.

\begin{figure}[htbp]
\vskip -0.0in
\begin{center}
\centerline{\includegraphics[width=1\columnwidth]{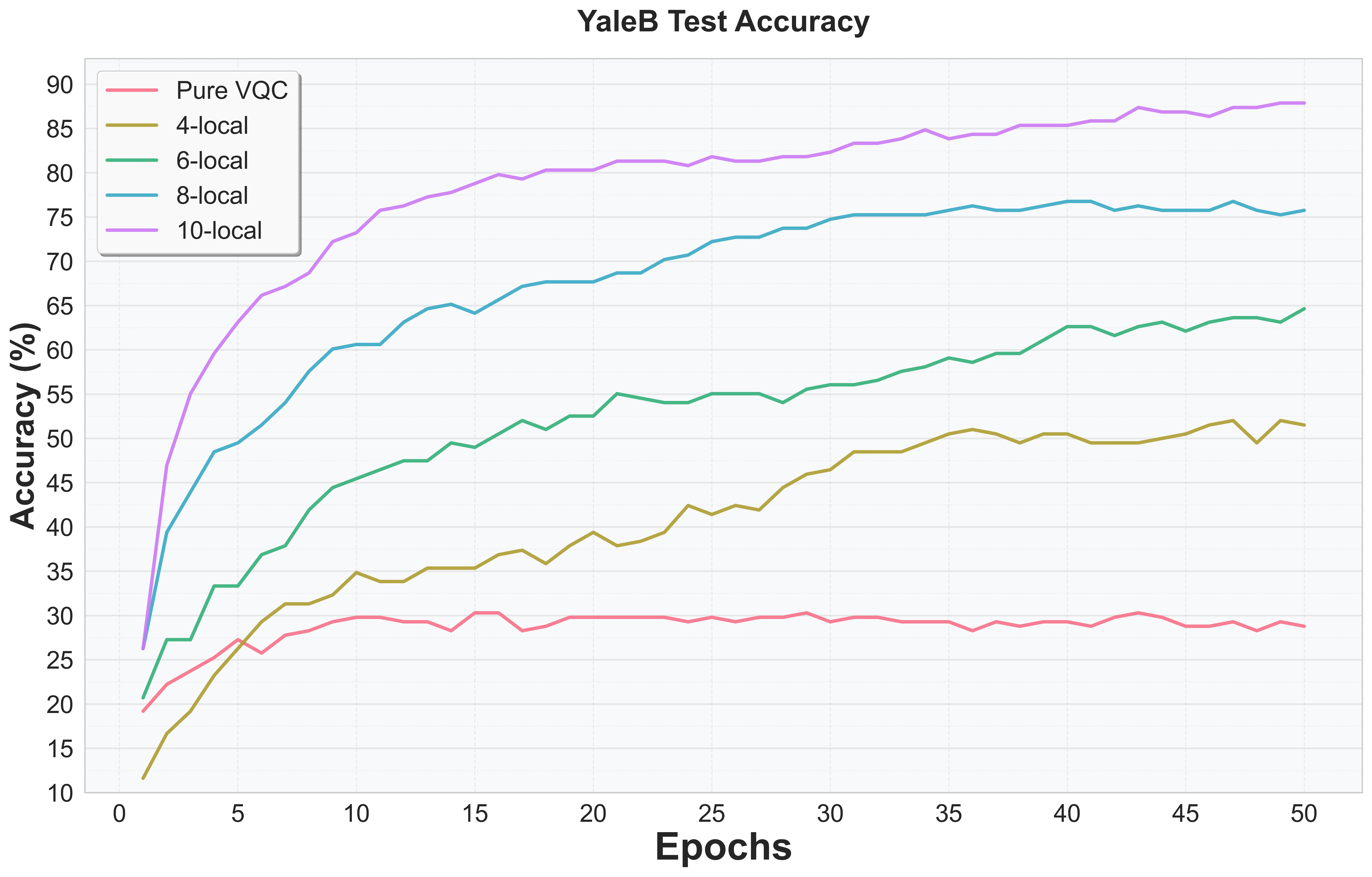}}
\caption{\textbf{Yale B test accuracy curves over epochs.} Performance steadily improved by increasing non-locality $k$.}
\label{fig: Yale B test accu}
\end{center}
\vskip -0.15in
\end{figure}

\begin{table}[ht]
\centering
\caption{\textbf{$k$-local DANO on Extended Yale B} from Fig.~\ref{fig: Yale B test accu}.}
\label{tab:yaleB_dano}
\vspace{-0mm}
\begin{tabular}{p{0.25\linewidth} c r c}
    \toprule
    \textbf{\# Model Params} & \textbf{$U(\theta)$} & \textbf{$\Lambda(\lambda)$} & \textbf{Best Test Acc.} \\
    \cmidrule(r){1-4}
    {Pure VQC}   & 96 & 0     & 30.3 \%\\
    {4-local}    & 96 & 256   & 52.0 \% \\
    {6-local}    & 96 & 1,024 & 64.7 \%\\
    {8-local}    & 96 & 4,096 & 76.8 \%\\
    {10-local}   & 96 & 16,384 & \textbf{87.9 \%} \\
    \bottomrule
\end{tabular}
\end{table}

\subsubsection{\textbf{Additional Results: Rescuing a VQC with DANO}}\label{subsubsec: Yale B Rescuing VQC}

To isolate the effect of adaptive measurements, a ``rescue'' protocol is conducted (Fig.~\ref{fig: Yale B Rescue VQC}). A pure VQC is trained for 50 epochs, yet its test accuracy remains low and quickly saturates (best test accuracy 30.30\% at epoch 15). The pure VQC model is then \textit{resuscitated} by switching to DANO at epoch 30: the unitary $U(\theta)$ is frozen at its epoch-30 weights, and only a diagonal observable $\Lambda(\lambda)$ is introduced and optimized thereafter, forming a DANO branch (pink line in Fig.~\ref{fig: Yale B Rescue VQC}). This intervention raises the test accuracy to 80.81\% by epoch 50, approaching the best 10-local DANO trained end-to-end from initialization (87.9\%) within a few percentage points. 

Consequently, the rescue outcome suggests that expanding the observable eigenvalues contribute larger on the performance than further optimizing the unitary evolution alone.

\begin{figure}[htbp]
\vskip -0.0in
\begin{center}
\centerline{\includegraphics[width=1\columnwidth]{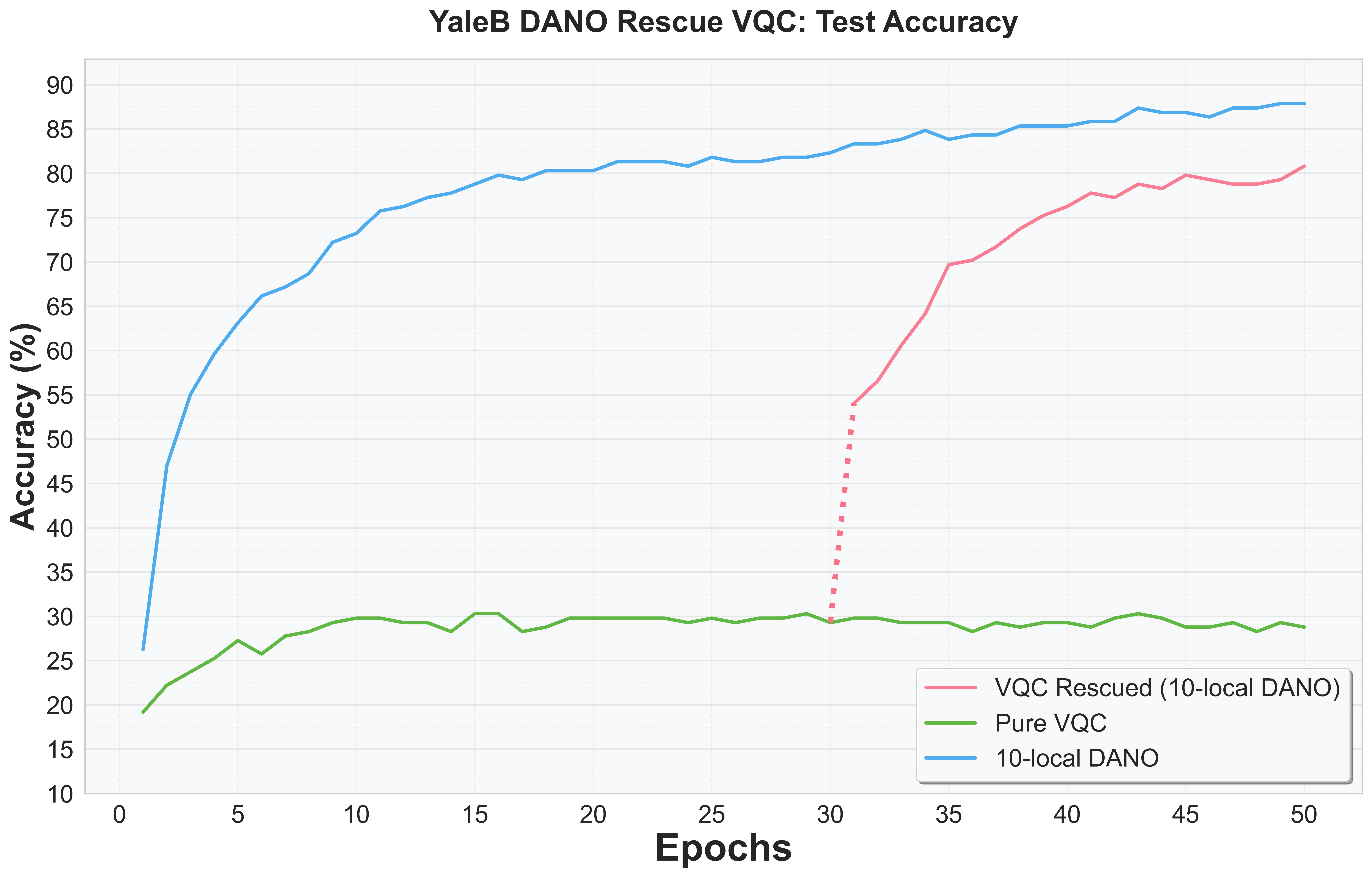}}
\caption{\textbf{Additional results}. The branch with dotted line is by switching from pure VQC to DANO of $k=10$ only the measurement eigenvalues to change. Eventually reaching 80.81\%, approaching the original 10-local 87.9\%.}
\label{fig: Yale B Rescue VQC}
\end{center}
\vskip -0.2in
\end{figure}

\section{Conclusion}\label{sec_conclusion}

We introduced DANO, in which a variational circuit $U(\theta)$ is paired with trainable diagonal observables $\Lambda(\lambda)$. 

Mathematically, diagonal observables serve as canonical representatives of ANO by modulo unitary similarity. Theorem in Sec.~\ref{Sec: DANO} guarantees DANO approaches ANO whenever the circuit family is sufficiently expressive (under the sense of Solovay-Kitaev). In particular, the conventional VQC also becomes a special case of DANO.

This diagonal representation by DANO yields a clear Classical-Quantum hybrid advantage, where the observable parameterization is reduced from $O(4^k)$ to $O(2^k)$ on $k$-local, and in statevector simulation the measurement step and observable-gradient computation drop from $O(2^{n+k})$ to $O(2^n)$. Thus, DANO trades certain measurement expressivity for substantially lower memory and training cost, making larger-$k$ regimes practical under fixed classical resources.

Empirically, sliding $k$-local DANO consistently improves over pure VQCs while keeping the unitary architecture fixed. Experimenting on reduced-sized MNIST, the best test accuracy improves from $32.7\%$ to $66.7\%$. On another Yale B face classification task, it rises from $30.3\%$ to $87.9\%$, with gains that remain strong at higher locality. A rescue experiment further shows that freezing a poorly performing VQC and optimizing only the diagonal observable can raise accuracy from $30.3\%$ to $80.8\%$, indicating that increasing observable spectrum contributes more than unitary optimization alone. 

Although ANO can outperform DANO at small $k$, the experiments and complexity analysis together show the intended trade-off, where DANO relinquishes certain accuracy in exchange for a substantial reduction in classical computational cost, while retaining major benefit of non-local adaptive measurement.

\bibliographystyle{IEEEtran}
\bibliography{references,bib/qml_examples,bib/vqc}

\end{document}